\documentclass[conference,letterpaper]{IEEEtran}
\addtolength{\topmargin}{9mm}
\IEEEoverridecommandlockouts
\usepackage{graphicx}
\usepackage{cite}
\usepackage{epstopdf}
\usepackage{amssymb, amsmath,amsthm, amsfonts}
\usepackage{ifthen}
\usepackage{tikz}
\usepackage{enumerate}
\usepackage{verbatim}
\usepackage{bbm,balance}
\usepackage{amsmath}
\usepackage{url}



\interdisplaylinepenalty=2500 

\usetikzlibrary{arrows,shapes}
\usetikzlibrary{positioning}
\usetikzlibrary{calc}
\usetikzlibrary{circuits.ee.IEC}

\allowdisplaybreaks

\vfuzz2pt 
\hfuzz2pt 
\newtheorem{theorem}{Theorem}

\newtheorem{lemma}{Lemma}

\newtheorem{prop}{Proposition}
\newtheorem{defn}{Definition}
\theoremstyle{definition}
\theoremstyle{definition}\newtheorem{example}{Example}

\newcommand{\be}{\begin{equation}}
\newcommand{\ee}{\end{equation}}
\newcommand{\ben}{\begin{equation*}}
\newcommand{\een}{\end{equation*}}
\newcommand{\ba}{\begin{eqnarray}}
\newcommand{\ea}{\end{eqnarray}}

\allowdisplaybreaks
\pdfminorversion 4


\begin{document}
\title{Multi-terminal Strong Coordination over Noisy Channels with Encoder Co-operation}
\author{
\IEEEauthorblockN{Viswanathan~Ramachandran}
\IEEEauthorblockA{Department of Electrical Engineering\\
        Indian Institute of Technology Jodhpur, India\\
         vramachandran@iitj.ac.in}\\
        \and
\IEEEauthorblockN{Tobias~J.~Oechtering and Mikael~Skoglund}
\IEEEauthorblockA{Division of Information Science and Engineering\\
        KTH Royal Institute of Technology, Stockholm, Sweden\\
         \{oech,skoglund\}@kth.se }
}
\maketitle

\begin{abstract}
We investigate the problem of strong coordination over a multiple-access channel (MAC) with cribbing encoders. In this configuration, two encoders observe independent and identically distributed (i.i.d.) samples of a source random variable each and encode the inputs to the MAC. The decoder which observes the output of the MAC together with side-information, must generate approximately i.i.d. samples of another random variable which is jointly distributed with the two sources and the side information. We also allow for possible encoder cooperation, where one of the encoders can non-causally crib from the other encoder's input. 
Independent pairwise shared randomness is assumed between each encoder and the decoder at limited rates. Firstly, in the presence of cribbing, we derive an achievable region based on joint source-channel coding. We also prove that in the absence of cribbing, our inner bound is tight for the special case when the MAC is composed of deterministic links, and the sources are conditionally independent given the side information. We then explicitly compute the regions for an example both with and without cribbing between the encoders, and demonstrate that cribbing strictly improves upon the achievable region.
\end{abstract}

\section{Introduction}
The framework of \emph{coordination}~\cite{cuff2010coordination} explores the minimal communication necessary to establish a desired joint distribution of actions among all nodes in a network. In light of the explosion of device-to-device communications as part of the Internet of Things (IoT), this architecture is useful in such scenarios where decentralized cooperation is desired amongst distributed agents. 
We focus on the notion of \emph{strong coordination}, where the distribution of the sequence of actions must be close in total variation to a target distribution. 

\begin{figure}[ht]
\centering
\begin{tikzpicture}[thick, circuit ee IEC]
\node (d1) at (-2,0) [rectangle, draw, right, minimum height=1.0cm, minimum width = 1.2cm]{Enc $1$};
\node (d2) at (2.45,-0.95) [rectangle, draw, right, minimum height=2.5cm, minimum width = 1.6cm]{Dec};
\draw[->] (d2) --++(1.15,0) node[right]{$Y^n$};
\draw[<->] (d1) to[out=60,in=120] node[midway, above] {$K_1 \in [1:2^{nR_{01}}]$} (d2);
\draw[<-] (d1) --++(-1.5,0) node[left]{$X_1^n$};
\node (e2) at (-2,-2) [rectangle, draw, right, minimum height=1.0cm, minimum width = 1.2cm]{Enc $2$};
\draw[<-] (e2) --++(-1.5,0) node[left]{$X_2^n$};
\draw[<-] (d2.80) --++(0,1.5) |- (-2.95,2.1) node[left]{$W^n$};
\node (ch) at (0,-1) [rectangle, draw, right, minimum height=0.8cm, minimum width = 1.2cm]{$p(\tilde{y}|\tilde{x}_1,\tilde{x}_2)$};
\draw[->] (d1) -- (ch) node[midway, above, sloped]{$\tilde{X}_1^n$};
\draw[->] (e2) -- (ch) node[midway, below, sloped]{$\tilde{X}_2^n$};
\draw[->] (ch) -- (d2) node[midway, above]{$\tilde{Y}^n$};
\draw[<->] (e2) to[out=-60,in=-120] node[midway, below] {$K_2 \in [1:2^{nR_{02}}]$} (d2);
\node (j) at (-0.4,-1.7) [rectangle]{};
\draw[->] (j) |- (-0.4,-1) to [make contact={info={},info'={$S$}}] (-1.39,-1) -- (d1.south);
\end{tikzpicture}
\caption{Strong coordination over a MAC with cribbing encoders} \label{fig:encSRnoisy}
\end{figure}
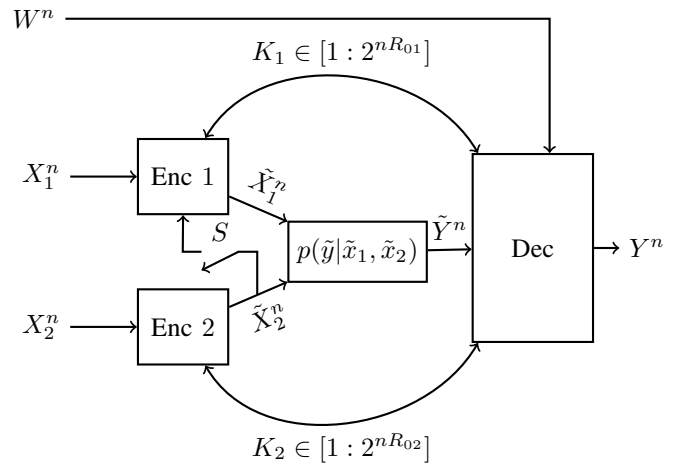

Complete characterizations for strong coordination in multi-terminal networks are comparatively rare. Building upon the the point-to-point network, a cascade network with secrecy constraints was investigated in \cite{SatpathyC16}, for which the optimal trade-off between communication and common randomness rates was determined. 
In \cite{kurri2022multiple,ramachandran2020strong,RamachandranOSIZS2024,RamachandranOSITW2024,ramachandran2024multi}, strong coordination was investigated over a multiple-access network of noiseless links, with a tight characterization derived for independent sources.

Simulation of a channel using another channel as a resource, rather than noiseless communication links as in \cite{cuff2013distributed}, was investigated by \cite{HaddadpourYAG13,HaddadpourYBGAA17}. 
However, multi-terminal extensions of strong coordination over noisy channels with the interplay between different terminals has not received much attention. Accordingly, the current paper extends channel simulation from noisy channels~\cite{HaddadpourYBGAA17} to a three-terminal scenario with possible encoder cooperation~\cite{willems1985discrete,asnani2012multiple,huleihel2016multiple}. 
It also extends the noiseless setting~\cite{kurri2022multiple} to noisy channels. 
The main technical novelty here lies in bringing out the role of encoder cooperation in reducing the shared randomness needed for coordination. 

\noindent \textbf{Main Contributions.}
When the switch $S$ in Fig.~\ref{fig:encSRnoisy} is closed, we derive an achievable region (Theorem~\ref{thm:encsideIBnoisyK0}) based on joint source-channel coding. 
When the switch $S$ in Fig.~\ref{fig:encSRnoisy} is open, we give a tight characterization for the special case when the MAC is composed of deterministic links, and the sources are conditionally independent given the side information (Theorem~\ref{thm:indepnoisy}). The non-trivial part lies in leveraging the deterministic channel and independent sources assumptions to obtain a single-letterization matching the inner bound, which is known to be difficult for distributed source coding settings~\cite{berger1977multiterminal}.
We then explicitly compute the regions for an example both with and without encoder cribbing and demonstrate that cribbing strictly improves upon the achievable region (see Section~\ref{sec:example}).

\section{System Model} \label{UPDATEDsec:SMRnoisy}
The setup comprises two encoders, with Encoder $j \in \{1,2\}$ observing an input given by $X_j^n$, and a decoder which observes a side information sequence $W^n$. For $j \in \{1,2\}$, Encoder $j$ and the decoder can harness pairwise shared randomness $K_j$, assumed to be uniformly distributed on $[1:2^{nR_{0j}}]$.  
When the switch $S$ in Fig.~\ref{fig:encSRnoisy} is closed, Encoder $2$ (which observes $X_2^n$ and has access to $K_2$) first generates the channel input sequence $\tilde{X}_2^n$. Then, Encoder $1$ (which observes $X_1^n$ and has access to $K_1$ as well as $\tilde X_2^n$) creates the channel input sequence $\tilde{X}_1^n$. A discrete-memoryless multiple-access channel specified by $p(\tilde{y}|\tilde{x}_1,\tilde{x}_2)$ maps the channel input sequences into an observation $\tilde{Y}^n$ at the receiver. The sources $(X_{1i},X_{2i},W_i)$, $i=1,2,\ldots,n$, are assumed to be i.i.d. with joint distribution specified by nature as $q_{X_1X_2W}$. The random variables $X_1,X_2,W,\tilde{X}_1,\tilde{X}_2,\tilde{Y}$ assume values in finite alphabets $\mathcal{X}_1,\mathcal{X}_2,\mathcal{W},\mathcal{\tilde{X}}_1,\mathcal{\tilde{X}}_2,\mathcal{\tilde{Y}}$, respectively. The shared randomness indices $K_1$ and $K_2$ are assumed to be independent of each other and of $(X_1^n,X_2^n,W^n)$.
The decoder obtains $(K_1,K_2,W^n,\tilde{Y}^n)$ and simulates an output sequence $Y^n$ (where $Y_i$, $i=1,\dots,n,$ assumes values in a finite alphabet $\mathcal{Y}$) which along with the input sources and side information must be approximately i.i.d. according to the joint distribution $q^{(n)}_{X_{1}X_{2}WY}(x_1^n,x_2^n,w^n,y^n):=\prod_{i=1}^n q_{X_1X_2WY}(x_{1i},x_{2i},w_i,y_i)$ (refer Figure~\ref{fig:encSRnoisy}). 

\begin{defn}\label{defn:codenoisy}
A $(2^{nR_{01}}, 2^{nR_{02}}, n)$ \emph{code} comprises two randomized encoders $p^{\emph{Enc}_2}(\tilde{x}_2^n|x_2^n,k_2)$ and $p^{\emph{Enc}_1}(\tilde{x}_1^n|x_1^n,k_1,\tilde x_2^n)$ and a randomized decoder $p^{\emph{Dec}}(y^n|k_1,k_2,w^n,\tilde{y}^n)$, where $k_j\in[1:2^{nR_{0j}}]$, $j \in \{1,2\}$.
\end{defn}
The induced joint distribution of all the random variables and the resulting induced marginal distribution on $(X_1^n,X_2^n,W^n,Y^n)$ are respectively given by
\begin{align*}
&p(x_1^n,x_2^n,w^n,k_1,k_2,\tilde{x}_1^n,\tilde{x}_2^n,\tilde{y}^n,y^n) \notag\\
&=\frac{1}{2^{n(R_{01}+R_{02})}}q(x_1^n,x_2^n,w^n) p^{\text{Enc}_2}(\tilde{x}_2^n|x_2^n,k_2) \\
& \hspace{6pt}\times p^{\text{Enc}_1}(\tilde{x}_1^n|x_1^n,k_1,\tilde x_2^n) p(\tilde{y}^n|\tilde{x}_1^n,\tilde{x}_2^n) p^{\text{Dec}}(y^n|k_1,k_2,w^n,\tilde{y}^n),
\end{align*}
and
\begin{align*}
&p^{\text{ind}}(x_1^n,x_2^n,w^n,y^n)\\
&\hspace{12pt}=\sum_{k_1,k_2,\tilde{x}_1^n,\tilde{x}_2^n,\tilde{y}^n}p(x_1^n,x_2^n,w^n,k_1,k_2,\tilde{x}_1^n,\tilde{x}_2^n,\tilde{y}^n,y^n).
\end{align*}

\begin{defn} \label{def:achnoisy}
A rate pair $(R_{01},R_{02})$ is said to be \emph{achievable for a target joint distribution} $q_{X_1X_2WY}$ \emph{with cribbing} provided there exists a sequence of $(2^{nR_{01}}, 2^{nR_{02}}, n)$ codes such that
\end{defn} 
\begin{align}\label{eqn:correctnessnoisy}
\lim_{n \to \infty} ||p^{\text{ind}}_{X_1^n,X_2^n,W^n,Y^n}-q^{(n)}_{X_1X_2WY}||_{1}=0,
\end{align}
where $q^{(n)}_{X_1X_2WY}$ is the target i.i.d. product distribution 
\begin{align*}
&q^{(n)}_{X_1X_2WY}(x_1^n,x_2^n,w^n,y^n) :=\prod_{i=1}^n q_{X_1X_2WY}(x_{1i},x_{2i},w_i,y_i).
\end{align*} 

\begin{defn}\label{defn:newnoisy}
The \emph{rate region} $\mathcal{R}_{\textup{noisy-coord}}^{\textup{MAC, crib}}$ is the closure of the set of all achievable rate pairs $(R_{01},R_{02})$. 
\end{defn}

We also separately consider the case when the switch $S$ in Fig.~\ref{fig:encSRnoisy} is open. A code, an achievable rate pair, and the rate region can be defined analogously. In particular, the code and an achievable rate pair can be defined similar to Definitions~\ref{defn:codenoisy} and \ref{def:achnoisy} by changing the map at Encoder~$1$ to simply $p^{\text{Enc}_1}(\tilde{x}_1^n|x_1^n,k_1)$. The \emph{rate region} $\mathcal{R}_{\textup{noisy-coord}}^{\textup{MAC}}$ is the closure of the set of all achievable rate tuples $(R_{01},R_{02})$ when the switch $S$ in Fig.\ref{fig:encSRnoisy} is open. 
Let $\mathcal{R}_{\textup{noisy-coord, $R_{02} \to \infty$}}^{\textup{MAC}}$ be the region when the pairwise shared randomness $K_2$ is unlimited: 
\begin{align}&\mathcal{R}_{\textup{noisy-coord, $R_{02} \to \infty$}}^{\textup{MAC}}=\{R_{01} : \exists \ R_{02} \nonumber\\
&\text{s.t.}\ (R_{01},R_{02}) \in \mathcal{R}_{\textup{noisy-coord}}^{\textup{MAC}}\}.
\end{align}

\section{Main Results}  \label{secton:mainresults1noisy}
We first present our results in the context where one of the encoders is allowed to crib~\cite[Situation 4]{willems1985discrete} from the other encoder's input non-causally (the switch $S$ in Figure~\ref{fig:encSRnoisy} is closed). This will facilitate cooperation between the encoders, in that Enc~$1$ can build its codebooks conditioned on the knowledge of the input codeword from Enc~$2$.
We have the following inner bound to the rate region $\mathcal{R}_{\textup{noisy-coord}}^{\textup{MAC, crib}}$. In the theorem below, the auxiliary random variables $U_{1}$, $U_{2}$ are used in the joint source-channel coding scheme to send source descriptions of $X_{1}$, $X_{2}$ respectively, while $T$ is a time-sharing random variable. The decoder then recovers the source descriptions and locally simulates $Y$.
\begin{theorem}[Achievable Rate Region with Cribbing amongst the Encoders] \label{thm:encsideIBnoisyK0}
Given a target joint p.m.f. $q_{X_1X_2WY}$, the rate pair $(R_{01},R_{02})$ is in $\mathcal{R}_{\textup{noisy-coord}}^{\textup{MAC, crib}}$ provided
\begin{subequations}
\begin{align} 
I(U_{1};U_2,W,\tilde{Y}|T) &\geq I(U_{1};X_1,\tilde X_2|T) \label{eq:4a}\\
I(U_{2};U_1,W,\tilde{Y}|T) &\geq I(U_{2};X_2|T) \label{eq:4b}\\
I(U_{1},U_{2};W,\tilde{Y}|T) &\geq I(U_{1};X_1,\tilde X_2|T)+I(U_{2};X_2|T) \notag\\
&\hspace{12pt}-I(U_{1};U_2|T) \label{eq:4c}\\
R_{01} &\geq I(U_{1};X_1,X_2,W,Y|T) \notag\\
&\hspace{12pt}-I(U_{1};U_{2},W,\tilde{Y}|T) \label{eq:4d}\\
R_{02} &\geq I(U_{2};X_1,X_2,W,Y|T) \notag\\
&\hspace{12pt}-I(U_{2};U_{1},W,\tilde{Y}|T) \label{eq:4e}\\
R_{01} &\geq I(U_{1};X_1,X_2,W,Y|T)\!-\!I(U_{1};W,\tilde{Y}|T) \notag\\
&\hspace{6pt}+I(U_{2};X_2|T)-I(U_{2};U_1,W,\tilde{Y}|T) \label{eq:4f}\\
R_{02} &\geq I(U_{2};X_1,X_2,W,Y|T)\!-\!I(U_{2};W,\tilde{Y}|T) \notag\\
&\hspace{6pt}+I(U_{1};X_1,\tilde X_2|T)-I(U_{1};U_2,W,\tilde{Y}|T) \label{eq:4g}\\
R_{01}+R_{02} &\geq I(U_{1},U_{2};X_1,X_2,W,Y|T) \notag\\
&\hspace{12pt}-I(U_{1},U_{2};W,\tilde{Y}|T) \label{eq:4h},
\end{align}
\end{subequations}
for some p.m.f. 
\begin{align}
p(x_1,&x_2,w,t,u_{1},u_{2},\tilde{x}_1,\tilde{x}_2,\tilde{y},y)=\nonumber\\
&\hspace{12pt}p(x_1,x_2,w)p(t) p(u_{2},\tilde{x}_2|x_2,t)p(u_{1},\tilde{x}_1|x_1,\tilde x_2,t) \nonumber\\
&\hspace{18pt}\times p(\tilde{y}|\tilde{x}_1,\tilde{x}_2)p(y|u_{1},u_{2},w,\tilde{y},t)\label{eq:pmfthm1noisyK0}
\end{align}
such that 
\begin{align*}
&\sum\limits_{u_{1},u_{2},\tilde{x}_1,\tilde{x}_2,\tilde{y}} \!\!\!\!\!\!\!\!\!\!p(x_1,x_2,w,u_{1},u_{2},\tilde{x}_1,\tilde{x}_2,\tilde{y},y|t) =q(x_1,x_2,w,y) \:\: \forall \:\: t.
\end{align*}
\end{theorem}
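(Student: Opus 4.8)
The plan is to prove achievability by a joint source--channel coding scheme that builds random source codebooks for $X_1$ and $X_2$ through the auxiliaries $U_1,U_2$, drives them with likelihood encoders, and controls the induced statistics with the soft-covering lemma; this generalizes the channel-simulation-from-a-channel construction of~\cite{HaddadpourYBGAA17} to the MAC with cribbing and extends the noiseless-MAC scheme of~\cite{kurri2022multiple}. Equivalently the whole argument can be run in the output-statistics-of-random-binning framework. Fix a p.m.f.\ of the form~\eqref{eq:pmfthm1noisyK0} and reveal a time-sharing sequence $t^n$ (i.i.d.\ $\sim p_T$) to all terminals. For $j\in\{1,2\}$ draw a codebook $\{u_j^n(k_j,m_j)\}$ of i.i.d.\ $\prod_i p_{U_j|T}(\cdot|t_i)$ codewords with $k_j\in[1:2^{nR_{0j}}]$ and $m_j\in[1:2^{n\hat R_j}]$, the two codebooks being mutually independent; $k_j$ plays the role of the pairwise shared randomness while $m_j$ is selected by Encoder~$j$. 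Encoder~$2$, seeing $(x_2^n,k_2)$, uses a likelihood encoder to pick $m_2$ so that $u_2^n(k_2,m_2)$ is jointly typical with $x_2^n$, then generates $\tilde x_2^n$ i.i.d.\ from $p(\tilde x_2|u_2,x_2,t)$. By the cribbing, Encoder~$1$ sees $(x_1^n,\tilde x_2^n,k_1)$ and picks $m_1$ so that $u_1^n(k_1,m_1)$ is jointly typical with $(x_1^n,\tilde x_2^n)$, and generates $\tilde x_1^n$ accordingly; this is exactly why $\tilde X_2$, not $X_1$ alone, governs the $U_1$-covering rate. The MAC produces $\tilde y^n$, and the decoder, given $(k_1,k_2,w^n,\tilde y^n,t^n)$, jointly typicality-decodes $(u_1^n,u_2^n)$ and then generates $y^n$ i.i.d.\ from $p(y|u_1,u_2,w,\tilde y,t)$.

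Two groups of rate constraints emerge. First, \emph{decoding at the receiver}: since $k_1,k_2$ are known at the decoder, reliable joint-typicality decoding of the pair $(u_1^n,u_2^n)$ from $(w^n,\tilde y^n,t^n)$ forces $\hat R_1\le I(U_1;U_2,W,\tilde Y|T)$, $\hat R_2\le I(U_2;U_1,W,\tilde Y|T)$ and $\hat R_1+\hat R_2\le I(U_1,U_2;W,\tilde Y|T)+I(U_1;U_2|T)$, the extra $I(U_1;U_2|T)$ reflecting that a random \emph{pair} of codewords from the two independent codebooks is jointly typical with $(w^n,\tilde y^n,t^n)$ with probability $\approx 2^{-n[I(U_1,U_2;W,\tilde Y|T)+I(U_1;U_2|T)]}$. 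Second, \emph{covering at the encoders and legitimacy of the shared randomness}: the soft-covering lemma requires the part of each $U_j$-codebook available to Encoder~$j$ after $k_j$ is fixed to cover its source observation, giving $\hat R_1\ge I(U_1;X_1,\tilde X_2|T)$ and $\hat R_2\ge I(U_2;X_2|T)$; combining these with the decoding bounds immediately yields~\eqref{eq:4a}--\eqref{eq:4c}. Moreover, for the induced law on $(X_1^n,X_2^n,W^n,Y^n)$ to match $q^{(n)}$ and for $(k_1,k_2)$ to be asymptotically uniform, mutually independent and independent of $(X_1^n,X_2^n,W^n)$, the full codebooks must resolve $(U_1^n,U_2^n)$ against all of $(X_1,X_2,W,Y)$, contributing the lower bounds $R_{01}+\hat R_1\ge I(U_1;X_1,X_2,W,Y|T)$, $R_{02}+\hat R_2\ge I(U_2;X_1,X_2,W,Y|T)$, their mixed variants, and $(R_{01}+\hat R_1)+(R_{02}+\hat R_2)\ge I(U_1,U_2;X_1,X_2,W,Y|T)+I(U_1;U_2|T)$. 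Eliminating $\hat R_1,\hat R_2$ from this system by Fourier--Motzkin (using the upper bounds above and $\hat R_j\ge 0$) produces exactly~\eqref{eq:4d}--\eqref{eq:4h}; in particular the $I(U_1;U_2|T)$ terms cancel in the sum constraint, leaving~\eqref{eq:4h}.

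Finally, the soft-covering lemma (equivalently the near-uniformity-of-bins and Slepian--Wolf lemmas in the binning formulation) shows that, under these constraints, the distribution induced by the scheme on $(X_1^n,X_2^n,W^n,Y^n)$ is within vanishing total variation of $q^{(n)}_{X_1X_2WY}$ when averaged over the random codebooks; a standard selection argument then fixes a deterministic codebook for which the bound still holds and for which $(K_1,K_2)$ is genuinely independent of the sources, and since $\mathcal{R}_{\textup{noisy-coord}}^{\textup{MAC, crib}}$ is defined as a closure the inequalities need only hold with arbitrarily small slack. I expect the main obstacle to be the cribbing bookkeeping: Encoder~$1$'s likelihood encoding and the soft-covering/near-uniformity estimates must be carried out conditionally on the \emph{random} codeword $\tilde x_2^n$, and one must carefully track the statistical dependence this induces between the two users' codewords --- this is what pins down the $U_1$-covering rate as $I(U_1;X_1,\tilde X_2|T)$ and the correction term in~\eqref{eq:4c} as precisely $-I(U_1;U_2|T)$ --- together with carrying the Fourier--Motzkin elimination through to land on exactly the five inequalities~\eqref{eq:4d}--\eqref{eq:4h}.
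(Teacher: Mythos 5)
Your proposal is correct and lands on exactly the constraint system the paper derives, but it is presented in the dual framework: you construct explicit random codebooks $\{u_j^n(k_j,m_j)\}$ and invoke soft-covering/likelihood encoding plus joint-typicality decoding (the Cuff--Haddadpour channel-synthesis style), whereas the paper runs the Output Statistics of Random Binning (OSRB) machinery of Yassaee \emph{et al.}, generating i.i.d.\ $(U_1^n,U_2^n,\dots)$ and hashing $U_j^n$ into bin indices $(K_j,F_j)$, with reverse encoders, a Slepian--Wolf decoder, and a de-randomization of the auxiliary indices $(F_1,F_2)$. The two are equivalent under the affine reparametrization $\hat R_j \leftrightarrow H(U_j)-R_{0j}-\tilde R_j$: your covering constraints $\hat R_1\ge I(U_1;X_1,\tilde X_2|T)$, $\hat R_2\ge I(U_2;X_2|T)$ are the paper's binning-inversion conditions $R_{0j}+\tilde R_j\le H(U_j|\cdot)$; your typicality-decoding conditions (with the $+I(U_1;U_2|T)$ correction for independent codebooks) are the paper's Slepian--Wolf conditions; and your resolvability conditions $R_{0j}+\hat R_j\ge I(U_j;X_1,X_2,W,Y|T)$ together with the joint one containing $+I(U_1;U_2|T)$ are the paper's conditions $\tilde R_j\le H(U_j|X_1,X_2,W,Y)$, $\tilde R_1+\tilde R_2\le H(U_1,U_2|X_1,X_2,W,Y)$ for decoupling $(F_1,F_2)$ from the sources. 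I checked that Fourier--Motzkin elimination of $(\hat R_1,\hat R_2)$ from your eight inequalities (plus nonnegativity) yields exactly~\eqref{eq:4a}--\eqref{eq:4h} with no additional binding constraints, including the cancellation of $I(U_1;U_2|T)$ in the sum bound and the derivation of the two asymmetric bounds~\eqref{eq:4f}--\eqref{eq:4g}. You also correctly isolate the one place where cribbing changes the bookkeeping: Encoder~$1$'s covering must be against $(X_1,\tilde X_2)$ rather than $X_1$ alone, which is precisely what the OSRB proof encodes via $P(u_1^n\mid k_1,f_1,x_1^n,\tilde x_2^n)$. The only loose points are cosmetic: you conflate ``likelihood encoder'' with a joint-typicality selection rule (both yield the same covering threshold, but they are not the same primitive), and the phrase ``their mixed variants'' is vague --- there are exactly the three resolvability constraints you then list, no others. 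What OSRB buys the paper is a uniform one-shot treatment of both the covering/decoding and the de-randomization steps via two applications of the same lemma; what your formulation buys is a more transparent operational picture of what each encoder does and where the $I(U_1;U_2|T)$ penalty comes from. Either way the result is the same.
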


We note that constraints \eqref{eq:4a}--\eqref{eq:4c} ensure that the source descriptions can be successfully recovered at the decoder, while constraints \eqref{eq:4d}--\eqref{eq:4h} are the minimum rates of shared randomness needed for channel simulation. In particular, note that the right-hand sides of the inequalities \eqref{eq:4a}, \eqref{eq:4c} and \eqref{eq:4g} featuring mutual information terms with $U_1$ can depend upon $\tilde X_2$. For a detailed proof, please refer to Section~\ref{app:pfThm1noisy}.   


Now consider the case when the switch $S$ in Figure~\ref{fig:encSRnoisy} is open, i.e., no cribbing is admissible. Then we can derive a tight characterization if the channel $p(\tilde{y}|\tilde{x}_1,\tilde{x}_2)$ is composed of deterministic links, i.e., $\tilde{Y}=(f_1(\tilde{X}_1),f_2(\tilde{X}_2))$ for deterministic maps $f_1(\cdot)$ and $f_2(\cdot)$, and $I(X_1;X_2|W)=0$. 
\begin{theorem}[Tight Characterization without Encoder Cribbing] \label{thm:indepnoisy}
Consider a target p.m.f. $q_{X_1X_2WY}$ such that $I(X_1;X_2|W)=0$, and also assume that the MAC is composed of deterministic links, i.e. $\tilde{Y}= (f_1(\tilde X_1),f_2(\tilde X_2)) \triangleq (\tilde Y_1,\tilde Y_2)$. Then the rate region $\mathcal{R}_{\textup{noisy-coord, $R_{02} \to \infty$}}^{\textup{MAC}}$ is characterized by the set of rates $R_{01}$ such that
\begin{subequations}
\begin{align} 
H(\tilde{Y}_1|W,T) &\geq I(U_{1},\tilde{Y}_1;X_1|W,T) \label{eq:3a}\\
H(\tilde{Y}_2|W,T) &\geq I(U_{2},\tilde{Y}_2;X_2|W,T) \label{eq:3b}\\
R_{01} &\geq I(U_1,\tilde Y_1;X_1,Y|X_2,W,T)-H(\tilde Y_1|W,T) \label{eq:3c},
\end{align}
\end{subequations}
for some p.m.f. 
\begin{align}
p(x_1,&x_2,w,t,u_{1},u_{2},\tilde{x}_1,\tilde{x}_2,\tilde y,y)=\nonumber\\
&\hspace{0pt}p(w)p(x_1|w)p(x_2|w)p(t) \prod_{j=1}^2 p(u_{j}|x_j,t)p(\tilde{x}_j|u_j,x_j,t)  \nonumber\\
&\hspace{18pt}\times p(\tilde{y}|\tilde{x}_1,\tilde{x}_2)p(y|u_{1},u_{2},w,\tilde{y},t)  \label{pmf:ob3noisy}
\end{align}
such that 
\begin{align*}
&\!\!\sum\limits_{u_{1},u_{2},\tilde{x}_1,\tilde{x}_2,\tilde{y}} \!\!\!\!\!\!\!\!\!\!p(x_1,x_2,w,u_{1},u_{2},\tilde{x}_1,\tilde{x}_2,\tilde{y},y|t) \!=\!q(x_1,x_2,w,y) \: \forall \: t,
\end{align*}
with the auxiliary cardinalities bounded as $|\mathcal{U}_{1}| \leq |\mathcal{X}_1||\mathcal{X}_2||\mathcal{W}||\mathcal{\tilde X}_1||\mathcal{\tilde X}_2||\mathcal{\tilde Y}||\mathcal{Y}|$, $|\mathcal{U}_{2}| \leq |\mathcal{U}_{1}||\mathcal{X}_1||\mathcal{X}_2||\mathcal{W}||\mathcal{\tilde X}_1||\mathcal{\tilde X}_2||\mathcal{\tilde Y}||\mathcal{Y}|$ and $|\mathcal{T}| \leq 3$.
\end{theorem}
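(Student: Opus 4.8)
The plan is to prove achievability by specializing the cribbing inner bound of Theorem~\ref{thm:encsideIBnoisyK0}, and then to establish a matching converse exploiting the deterministic-links and conditional-independence hypotheses.

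\textbf{Achievability.} First I would observe that although Theorem~\ref{thm:indepnoisy} concerns the \emph{no-cribbing} region $\mathcal{R}_{\textup{noisy-coord}}^{\textup{MAC}}$, the achievability scheme of Theorem~\ref{thm:encsideIBnoisyK0} can still be used: since the two source descriptions are generated independently here (the target p.m.f.\ in \eqref{pmf:ob3noisy} factors as $p(u_1,\tilde x_1|x_1,t)p(u_2,\tilde x_2|x_2,t)$, with no dependence of the Enc~$1$ codeword on $\tilde X_2$), the cribbing link is never exercised, so the same region is achievable without cribbing. I would then take $R_{02}\to\infty$ in the bounds \eqref{eq:4a}--\eqref{eq:4h}, which eliminates \eqref{eq:4b}, \eqref{eq:4e}, \eqref{eq:4f}, \eqref{eq:4h}, and drop the $\tilde X_2$ terms (since $U_1 \perp \tilde X_2$). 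Next I would specialize the channel to $\tilde Y = (\tilde Y_1,\tilde Y_2)$ with $\tilde Y_j = f_j(\tilde X_j)$; here the natural choice is to fold the deterministic channel output $\tilde Y_j$ into the reconstruction, i.e.\ replace $U_j$ by the pair $(U_j,\tilde Y_j)$ in the decoder's recovery step. Using $I(X_1;X_2|W)=0$ and the Markov structure, the covering constraint \eqref{eq:4a} collapses to $I(U_1,\tilde Y_1; W,\tilde Y \mid T) \geq I(U_1,\tilde Y_1; X_1 \mid T)$; since $\tilde Y_1 = f_1(\tilde X_1)$ is a function of the transmitted codeword and $\tilde Y_2 \perp (U_1,X_1) \mid (W,T)$, this reduces after cancelling common terms to $H(\tilde Y_1\mid W,T) \geq I(U_1,\tilde Y_1;X_1\mid W,T)$, which is \eqref{eq:3a}; similarly \eqref{eq:4c} with $R_{02}\to\infty$ and the above substitution gives \eqref{eq:3b}. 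Finally the randomness constraint \eqref{eq:4d} becomes $R_{01} \geq I(U_1,\tilde Y_1;X_1,X_2,W,Y\mid T) - I(U_1,\tilde Y_1; W,\tilde Y\mid T)$; expanding with $I(X_1;X_2|W)=0$ and $\tilde Y_2 \perp (X_1,U_1)\mid(X_2,W,T)$ turns the right-hand side into $I(U_1,\tilde Y_1;X_1,Y\mid X_2,W,T) - H(\tilde Y_1\mid W,T)$, which is \eqref{eq:3c}. The cardinality bounds follow from the support lemma applied to the final single-letter expressions.

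\textbf{Converse.} Given a sequence of codes meeting \eqref{eqn:correctnessnoisy}, I would introduce the standard auxiliary identifications. With $Q$ uniform on $[1:n]$ and the usual past/future blocks, set $T = (Q, W^{Q-1}_{\text{or suitable side info}})$, $U_{1} = (K_1, X_1^{Q-1}, \tilde Y_{1,Q+1}^n, W^n_{-Q}, Q)$, and let $\tilde Y_{1} = \tilde Y_{1,Q}$, $\tilde Y_2 = \tilde Y_{2,Q}$. The key structural facts I would verify are: (i) because the links are deterministic, $\tilde Y_1^n$ is a deterministic function of $\tilde X_1^n$, which in turn (no cribbing) depends only on $(X_1^n, K_1)$, so $\tilde Y_1^n \perp (X_2^n, K_2) \mid W^n$ up to the coordination error, and symmetrically for $\tilde Y_2^n$; (ii) because $I(X_1;X_2|W)=0$ in the target and the induced distribution is TV-close, $X_1^n$ and $X_2^n$ are approximately conditionally independent given $W^n$. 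I would then bound each term: for \eqref{eq:3a}, start from $H(\tilde Y_1^n \mid W^n) \geq I(\tilde Y_1^n; X_1^n, Y^n \mid W^n) \geq I(\tilde Y_1^n, K_1; X_1^n \mid W^n)$ and single-letterize via the telescoping/Csisz\'ar-sum identity; the output $\tilde Y_1$ entering on the left must be shown to match the auxiliary on the right, using that $\tilde Y_1 = f_1(\tilde X_1)$. For \eqref{eq:3c}, use $nR_{01} \geq H(K_1) \geq I(K_1; X_1^n, X_2^n, W^n, Y^n) - I(K_1; \tilde Y^n, W^n)$, add and subtract $H(\tilde Y_1^n \mid \cdot)$, and single-letterize; conditional independence of the sources given $W$ and the deterministic structure let me restrict the conditioning to $X_2, W$ as in \eqref{eq:3c}. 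Throughout I would carry the $o(n)$ slack from Fano/continuity-of-entropy (Csiszár--Körner-type) lemmas applied to the TV bound \eqref{eqn:correctnessnoisy}, and verify that the resulting single-letter p.m.f.\ has the product form \eqref{pmf:ob3noisy} — in particular that $U_2$ can be taken to satisfy the claimed Markov chain, which is where one invokes that $\tilde X_1^n$ does not depend on $X_2^n$.

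\textbf{Main obstacle.} The delicate step is the converse single-letterization of \eqref{eq:3c}: one must simultaneously (a) exploit $I(X_1;X_2|W)=0$ to reduce the conditioning set to $(X_2,W)$ rather than the full $(X_1,X_2,W)$ past, and (b) exploit the deterministic links so that the entropy term $H(\tilde Y_1|W,T)$ appears with exactly the right auxiliary inside. This is precisely the point flagged in the introduction as "known to be difficult for distributed source coding settings"; the trick is that determinism makes $\tilde Y_1^n$ a legitimate function to condition on (no auxiliary needed to describe it), and independence decouples the two branches so no cross terms between $(U_1,\tilde Y_1)$ and $(U_2,\tilde Y_2)$ survive. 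I would expect the proof of the product structure of \eqref{pmf:ob3noisy} — showing the induced auxiliaries really do factor as claimed given $(W,T)$ — to require the most care, handled by combining the two independence facts above with the Markov relations forced by the code definition.
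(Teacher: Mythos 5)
Your converse is essentially the paper's argument: start from a nonnegative quantity (for~\eqref{eq:3a}) or from $H(K_1)\geq H(K_1|W^n)$ (for~\eqref{eq:3c}), expand by the chain rule, invoke the coordination constraint~\eqref{eqn:correctnessnoisy} together with~\cite[Lemma 6]{CerviaLLB20} to kill the cross-time terms, and identify a single-letter auxiliary. The paper's identification is $U_{1i}=(K_1,\tilde Y_{1\sim i},W_{\sim i})$ rather than the $(K_1,X_1^{Q-1},\tilde Y_{1,Q+1}^n,W^n_{-Q},Q)$ you propose, but that is a cosmetic difference; the essential insight you correctly flag — determinism lets $\tilde Y_1$ enter both sides as a concrete channel symbol rather than an auxiliary, and conditional independence decouples the two branches — is exactly what the paper exploits.

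For achievability you take a genuinely different route: you specialize Theorem~\ref{thm:encsideIBnoisyK0} (the paper's own cribbing inner bound) to a product-form p.m.f.\ and fold $\tilde Y_j$ into $U_j$, whereas the paper instead invokes~\cite[Theorem~3]{HaddadpourYAG13} adapted to side information and the conditional-independence/deterministic structure. Your route has the merit of keeping the whole argument self-contained and makes the ``cribbing helps'' example in Section~\ref{sec:example} even cleaner. However, there is a bookkeeping slip you should fix: taking $R_{02}\to\infty$ eliminates only~\eqref{eq:4e}, \eqref{eq:4g}, \eqref{eq:4h} (the ones with $R_{02}$ on the left-hand side). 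In particular \eqref{eq:4b} is a feasibility constraint with no $R_{02}$ in it; under the product form and the $U_j\to(U_j,\tilde Y_j)$ substitution it is~\eqref{eq:4b} (not \eqref{eq:4c}) that specializes to~\eqref{eq:3b}. The joint constraint~\eqref{eq:4c} reduces to the sum of~\eqref{eq:3a} and~\eqref{eq:3b} and so becomes redundant, and likewise~\eqref{eq:4f} remains present but is implied by~\eqref{eq:3b} and~\eqref{eq:3c} once you note that $I(U_2,\tilde Y_2;X_2\mid T)-I(U_2,\tilde Y_2;U_1,\tilde Y_1,W,\tilde Y\mid T)=I(U_2,\tilde Y_2;X_2\mid W,T)-H(\tilde Y_2\mid W,T)\leq 0$. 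With those attributions corrected, the specialization goes through and gives precisely~\eqref{eq:3a}--\eqref{eq:3c}.
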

The achievability largely follows from~\cite[Theorem 3]{HaddadpourYAG13}, by also accounting for the decoder side information and enforcing the conditional independence $p(x_1,x_2,w) = p(w)p(x_1|w)p(x_2|w)$ along with $\tilde{Y}=(\tilde Y_1,\tilde Y_2)$. A detailed proof of the converse is given in Section~\ref{proof:conv-thm5}.

\section{Example: Cribbing Helps for Channel Simulation} \label{sec:example}
In this section, we show with the help of an example that in the presence of cribbing between the encoders, the achievable region can be improved upon. Our illustration will be in the context of Theorem~\ref{thm:indepnoisy}. Accordingly, we first compute the region of Theorem~\ref{thm:indepnoisy} without encoder cribbing for this example, and then show that the region is improved in the presence of encoder cribbing.  
\begin{example} \label{ex:1}
Let $X_1=(X_{11},X_{12})$, where $X_{11}$ and $X_{12}$ are independent and uniform binary random variables. Let $X_2=B$, where $B$ is uniformly distributed on $\{1,2\}$ and independent of $X_1$. Suppose the channel to be simulated $q_{Y|X_1,X_2}$ is such that $Y=X_{1B}$. For the sake of simplicity, we let $W=\varnothing$ and assume a perfect resource channel $\tilde Y=(\tilde X_1,\tilde X_2)$. We focus on the requisite values of $H(\tilde X_1)$ and $H(\tilde X_2)$ for channel simulation in the presence of unlimited shared randomness rates, with and without encoder cribbing.   
\end{example}

In the absence of cribbing between the encoders, from Theorem~\ref{thm:indepnoisy}, the region $\mathcal{R}_{\textup{noisy-coord}}^{\textup{MAC}}$ for unlimited shared randomness rates $R_{01},R_{02}$ and a perfect channel $\tilde Y_j=\tilde X_j$ for $j \in \{1,2\}$ is simply characterized by the feasibility constraints
\begin{align*} 
H(\tilde{X}_1|T) &\geq I(U_{1},\tilde{X}_1;X_1|T) \\
H(\tilde{X}_2|T) &\geq I(U_{2},\tilde{X}_2;X_2|T),
\end{align*}
for some p.m.f. 
\begin{align*}
p(x_1,&x_2,t,u_{1},u_{2},\tilde{x}_1,\tilde{x}_2,y)=\nonumber\\
&\hspace{6pt}p(x_1)p(x_2)p(t) \prod_{j=1}^2 p(u_{j},\tilde x_j|x_j,t) p(y|u_{1},u_{2},\tilde{x}_1,\tilde x_2,t)  
\end{align*}
such that 
{$\sum\limits_{u_{1},u_{2},\tilde{x}_1,\tilde{x}_2}p(x_1,x_2,u_{1},u_{2},\tilde{x}_1,\tilde{x}_2,y|t)=q(x_1,x_2,y),$}
for all $t$. The following proposition explicitly characterizes the optimal region (of feasibility constraints) for the given $q_{X_1 X_2 Y}$.

\begin{prop} \label{prop:1}
For the target distribution $q_{X_1 X_2 Y}$ in Example~\ref{ex:1}, channel simulation is feasible if and only if $H(\tilde X_1) \geq 2$ and $H(\tilde X_2) \geq 1$ (thus $H(\tilde X_1,\tilde X_2) \geq 3$).
\end{prop}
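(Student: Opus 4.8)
The plan is to establish the two bounds $H(\tilde X_1)\ge 2$ and $H(\tilde X_2)\ge 1$ separately as necessary conditions, and then exhibit a single scheme (choice of auxiliaries) that meets both with equality, thereby proving sufficiency. The starting point is the feasibility characterization displayed just before the proposition: we need a p.m.f.\ factoring as $p(x_1)p(x_2)p(t)\prod_{j}p(u_j,\tilde x_j|x_j,t)\,p(y|u_1,u_2,\tilde x_1,\tilde x_2,t)$ whose $(x_1,x_2,y)$-marginal (conditioned on each $t$) equals $q_{X_1X_2Y}$, and such that $H(\tilde X_j|T)\ge I(U_j,\tilde X_j;X_j|T)$ for $j=1,2$.

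For the necessity of $H(\tilde X_2)\ge 1$: since $X_2=B$ is uniform on $\{1,2\}$ and $Y=X_{1B}$ depends on which of the two independent bits of $X_1$ is selected, the decoder must in effect learn $B$ to produce the correct $Y$. I would make this precise by lower-bounding $I(U_2,\tilde X_2;X_2|T)$. The cleanest route: $Y$ is a deterministic function of $(X_1,B)$, so $H(Y|X_1,B)=0$; meanwhile, because the decoder's output $Y$ must be conditionally correct and $Y\!-\!(U_1,U_2,\tilde X_1,\tilde X_2,T)\!-\!X_1$ (the decoder sees only $U_1,U_2,\tilde X_1,\tilde X_2$, not $X_1,X_2$ directly — note here $W=\varnothing$), the variable $B=X_2$ must be recoverable from $(U_2,\tilde X_2,T)$ up to the uncertainty that $U_1,\tilde X_1$ can't resolve. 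Since $X_1\perp X_2$ and $U_1,\tilde X_1$ are generated from $X_1$ alone, $(U_1,\tilde X_1)\perp X_2 \mid T$, so conditioning on them does not help identify $B$; a short argument via $H(Y|X_1,T)=0$ combined with the Markov structure forces $H(X_2|U_2,\tilde X_2,T)=0$, i.e.\ $I(U_2,\tilde X_2;X_2|T)=H(X_2|T)\ge H(X_2)-$ (slack from $T$), and by concavity/averaging over $T$ one gets $\ge 1$. Hence $H(\tilde X_2|T)\ge 1$, and $H(\tilde X_2)\ge H(\tilde X_2|T)\ge 1$.

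For the necessity of $H(\tilde X_1)\ge 2$: here the decoder, not knowing $B$ a priori (it is generated by Encoder~2 and only transmitted implicitly), must be able to produce $X_{1B}$ for \emph{whichever} value $B$ takes; since the two outcomes $X_{11},X_{12}$ are independent uniform bits, and $B$ is independent of $X_1$, the pair $(\tilde X_1,U_1)$ — which is all the information about $X_1$ reaching the decoder — must let the decoder reconstruct $Y$ correctly for the realized $B$. The key observation is that $U_1,\tilde X_1$ are conditionally independent of $B$ given $T$, so the same $(u_1,\tilde x_1)$ must simultaneously support simulating $X_{11}$ (when $B=1$) and $X_{12}$ (when $B=2$) with the correct conditional law; this forces $(U_1,\tilde X_1)$ to retain enough about $X_1$ that $I(U_1,\tilde X_1;X_1|T)\ge H(X_1|T)$, and averaging over $T$ gives $\ge H(X_1)=2$. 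I would formalize this by noting $Y=X_{1B}$ with $B\perp(X_1,U_1,\tilde X_1)\mid T$ implies, after marginalizing over $B$, that $X_1$ is a function of $(U_1,\tilde X_1)$ conditioned on $T$ (otherwise the joint law of $(X_1,Y)$ could not be matched for both values of $B$), whence $H(\tilde X_1|T)\ge I(U_1,\tilde X_1;X_1|T)\ge H(X_1|T)$ and then $H(\tilde X_1)\ge 2$.

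The main obstacle I anticipate is the second bound: it is tempting but wrong to think the decoder only needs one bit of $X_1$ (``the $B$-th bit''), because $B$ is itself only conveyed through the channel and the scheme must work for the joint distribution, so one must rule out any clever coding that reuses shared randomness to avoid sending both bits of $X_1$. The right lever is exactly the conditional independence $(U_1,\tilde X_1)\perp B\mid T$ baked into the factorization \eqref{pmf:ob3noisy} (Encoder~1 has no access to $\tilde X_2$ in the no-cribbing case), which is what makes ``$X_1$ recoverable from $(U_1,\tilde X_1,T)$'' unavoidable. For sufficiency, I would simply take $T$ constant, $U_1=\tilde X_1=X_1$ (two bits, $H=2$), $U_2=\tilde X_2=X_2$ (one bit, $H=1$), and the decoder computes $Y=\tilde X_{1,\tilde X_2}$ deterministically; one checks the factorization and the marginal constraint hold, and both feasibility inequalities are met with equality, completing the proof. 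The claim $H(\tilde X_1,\tilde X_2)\ge 3$ is then immediate since $H(\tilde X_1,\tilde X_2)\le H(\tilde X_1)+H(\tilde X_2)$ is the wrong direction — instead note $H(\tilde X_1,\tilde X_2)\ge H(\tilde X_1)\ge 2$ and the achievability example attains exactly $3$, so I would phrase the parenthetical as a consequence of the achievable scheme attaining $(2,1)$ rather than as an independent bound.
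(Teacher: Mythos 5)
Your high-level plan matches the paper's: prove $H(X_2|U_2,\tilde X_2)=0$ and $H(X_1|U_1,\tilde X_1)=0$ as necessity, then show $U_j=\tilde X_j=X_j$ achieves the bounds. The achievability half and the identification of the independence $(U_1,\tilde X_1,X_1)\Perp(U_2,\tilde X_2,X_2)$ as the crucial structural ingredient are both correct. But the necessity half, as sketched, has a gap that would not close on its own.

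Concretely, your pivot for the $\tilde X_2$ bound, namely ``$H(Y|X_1,T)=0$,'' is false: $Y=X_{1B}$, and $B$ is not determined by $X_1$; in fact $H(Y|X_1)=\tfrac12$ under $q$. Replacing it with the correct $H(Y|X_1,X_2,T)=0$ still does not directly ``force'' $H(X_2|U_2,\tilde X_2,T)=0$ by a one-line information-theoretic manipulation, because the decoder does not observe $(X_1,X_2)$ and one must transfer the determinism through the encoder/decoder Markov structure. What the paper actually does is (i) derive the Markov chain $Y\to(X_1,U_2,\tilde X_2)\to X_2$ as a \emph{lemma} (their display \eqref{eq:MCprop}), using the output Markov chain $Y\to(U_1,\tilde X_1,U_2,\tilde X_2)\to(X_1,X_2)$ together with the independence of the two encoder branches; and (ii) argue by contradiction on a fixed realization: if some $(u_2,\tilde x_2)$ of positive probability leaves $X_2$ undetermined, then conditioning further on $X_1=(0,1)$ and using the Markov chain in (i) forces $P(Y=0|X_1=(0,1),u_2,\tilde x_2)=0$ (from the $X_2=2$ branch) and simultaneously $P(Y=1|X_1=(0,1),u_2,\tilde x_2)=0$ (from the $X_2=1$ branch), an impossibility. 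Your sketch gestures at this conclusion but never establishes the requisite Markov chain nor identifies the realization that produces the clash; as written it would not compile into a proof. The same concern applies to the $\tilde X_1$ bound, where the paper proves the analogous Markov chain $Y\to(X_2,U_1,\tilde X_1)\to X_1$ and then runs a case analysis over all six two-element supports of $p_{X_1|U_1=u_1,\tilde X_1=\tilde x_1}$; your statement that $(U_1,\tilde X_1)$ ``must retain enough about $X_1$'' is the right intuition but is not yet an argument — it is precisely the thing to be proved, and the case analysis (or an equivalent) is unavoidable. Finally, your hesitation about the parenthetical is misplaced: the factorization gives $\tilde X_1\Perp\tilde X_2\mid T$, so $H(\tilde X_1,\tilde X_2)\ge H(\tilde X_1,\tilde X_2\mid T)=H(\tilde X_1\mid T)+H(\tilde X_2\mid T)\ge 3$ follows as a genuine converse consequence of the two individual bounds, not merely from the achievability scheme.
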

\begin{proof}
For the achievability, it suffices to prove that there exists a p.m.f.
\begin{align*}
p(x_1,&x_2,u_{1},u_{2},\tilde{x}_1,\tilde{x}_2,y)=\\
&\hspace{6pt}p(x_1)p(x_2)\prod_{j=1}^2 p(u_{j},\tilde x_j|x_j) p(y|u_{1},u_{2},\tilde{x}_1,\tilde x_2)   
\end{align*}
such that $\sum\limits_{u_{1},u_{2},\tilde{x}_1,\tilde{x}_2}p(x_1,x_2,u_{1},u_{2},\tilde{x}_1,\tilde{x}_2,y)=q(x_1,x_2,y)$, $I(U_{1},\tilde{X}_1;X_1)=2$ and $I(U_{2},\tilde{X}_2;X_2)=1$. By choosing $U_1=X_1$ and $U_2=X_2$, it is clear that the conditions on the joint p.m.f. $p(x_1,x_2,u_{1},u_{2},\tilde{x}_1,\tilde{x}_2,y)$ are satisfied and $I(U_{1},\tilde{X}_1;X_1)=H(X_1)=2$, $I(U_{2},\tilde{X}_2;X_2)=H(B)=1$. The interesting part is the converse, in Appendix~\ref{sec:exproof}.
\end{proof}

We next prove that with cribbing, it is possible to 
achieve channel simulation with smaller values of $H(\tilde X_1)$ and $H(\tilde X_2)$ compared to the optimal region without cribbing. To see this, we choose $\tilde X_2 = X_2$ which in the presence of cribbing makes $X_2=B$ available to Enc~$1$. Then Enc~$1$ can afford to send only $\tilde X_1=X_{1B}$ (instead of the entire $(X_{11},X_{12})$ necessitated in the absence of cribbing). More formally, the region $\mathcal{R}_{\textup{noisy-coord}}^{\textup{MAC, crib}}$ in Theorem~\ref{thm:encsideIBnoisyK0} specializes for independent sources, perfect channel and unlimited shared randomness to the set of feasibility constraints
\begin{align*}
H(\tilde{X}_1|T) &\geq I(U_{1},\tilde X_1;X_1,\tilde X_2|T)\!-\!I(U_1;U_{2},\tilde X_2|\tilde X_1,T) \\
H(\tilde{X}_2|T) &\geq I(U_{2},\tilde{X}_2;X_2|T)-I(U_2;U_{1},\tilde X_1|\tilde X_2,T)\\
H(\tilde X_1,\tilde{X}_2|T) &\geq I(U_{2},\tilde{X}_2;X_2|T)-I(U_{2},\tilde{X}_2;U_{1},\tilde{X}_1|T) \\
&\phantom{wwww}+I(U_{1},\tilde X_1;X_1,\tilde X_2|T),
\end{align*}
for some p.m.f. 
\begin{align}
p(x_1,&x_2,t,u_{1},u_{2},\tilde{x}_1,\tilde{x}_2,y)=\nonumber\\
&\hspace{6pt}p(x_1)p(x_2)p(t) p(u_{2},\tilde x_2|x_2,t)p(u_{1},\tilde x_1|x_1,\tilde x_2,t)\nonumber\\
&\hspace{18pt} \times p(y|u_{1},u_{2},\tilde{x}_1,\tilde x_2,t)  
\end{align}
such that 
\begin{align*}
&\sum\limits_{u_{1},u_{2},\tilde{x}_1,\tilde{x}_2}p(x_1,x_2,u_{1},u_{2},\tilde{x}_1,\tilde{x}_2,y|t)=q(x_1,x_2,y),
\end{align*}
for all $t$. Now we can choose $U_1=X_{1B}$ and $U_2=B$ to obtain that channel simulation is feasible if $H(\tilde X_1) \geq 1$, $H(\tilde X_2) \geq 1$ and $H(\tilde X_1,\tilde X_2) \geq 2$. This strictly improves upon the (optimal region of) feasibility constraints without cribbing, which were $H(\tilde X_1) \geq 2$, $H(\tilde X_2) \geq 1$ and $H(\tilde X_1,\tilde X_2) \geq 3$.

\section{Proof of Theorem~\ref{thm:encsideIBnoisyK0}} \label{app:pfThm1noisy}
The proof makes use of the Output Statistics of Random Binning (OSRB) framework~\cite{yassaee2014achievability}. In the following discussion, we adopt the convention of using capital letters (such as $P_{X}$) to represent random p.m.f.'s, as in \cite{cuff2013distributed,yassaee2014achievability}. 
We establish achievability with $|\mathcal{T}|=1$, for simplicity. \\ 
\underline{Random Binning Protocol:} Let the random variables $(U_{1}^n,U_{2}^n,X_1^n,X_2^n,W^n,\tilde{X}_1^n,\tilde{X}_2^n,\tilde{Y}^n,Y^n)$ be drawn i.i.d. according to the joint distribution
\begin{align}
&p(x_1,x_2,w,u_{1},u_{2},\tilde{x}_1,\tilde{x}_2,\tilde{y},y) \notag\\
&=p(x_1,x_2,w)p(u_{2}|x_2)p(\tilde{x}_2|u_{2},x_2)p(u_{1}|x_1,\tilde x_2) \notag\\
&\phantom{ww}\times p(\tilde{x}_1|u_{1},x_1,\tilde x_2)p(\tilde{y}|\tilde{x}_1,\tilde{x}_2)p(y|u_{1},u_{2},w,\tilde{y})
\end{align}
such that the marginal $p(x_1,x_2,w,y)=q(x_1,x_2,w,y)$. 
The following random binning is then applied: independently generate two uniform bin indices $(K_j,F_j)$ of $U_{j}^n$, where $K_j = \phi_{j1}(U_{j}^n) \in [1:2^{n R_{0j}}]$ and $F_j = \phi_{j2}(U_{j}^n) \in [1:2^{n \tilde{R}_j}]$.
The receiver estimates $(\hat{u}_{1}^n,\hat{u}_{2}^n)$ from its observations $(k_1,f_1,k_2,f_2,w^n,\tilde{y}^n)$ using a Slepian-Wolf decoder.
The random p.m.f. induced by this binning scheme is given by:
\begin{align*}
&P(x_1^n,x_2^n,w^n,y^n,u_{1}^n,u_{2}^n,\tilde{x}_1^n,\tilde{x}_2^n,\tilde{y}^n,k_1,f_1,k_2,f_2,\hat{u}_{1}^n,\hat{u}_{2}^n) \notag\\
&= p(x_1^n,x_2^n,w^n)p(u_{2}^n|x_2^n)p(\tilde{x}_2^n|u_{2}^n,x_2^n)p(u_{1}^n|x_1^n,\tilde x_2^n) \notag\\
&\phantom{w} \times p(\tilde{x}_1^n|u_{1}^n,x_1^n,\tilde x_2^n)p(\tilde{y}^n|\tilde{x}_1^n,\tilde{x}_2^n)p(y^n|u_{1}^n,u_{2}^n,w^n,\tilde{y}^n) \notag\\
&\phantom{w} \times P(k_1,f_1|u_{1}^n) P(k_2,f_2|u_{2}^n) \notag\\
&\phantom{w} \times P^{SW}(\hat{u}_{1}^n,\hat{u}_{2}^n|k_1,f_1,k_2,f_2,w^n,\tilde{y}^n)  \\
&= p(x_1^n,x_2^n,w^n) P(k_2,f_2,u_{2}^n|x_2^n) p(\tilde{x}_2^n|u_{2}^n,x_2^n)  \notag\\
&\phantom{w} \times P(k_1,f_1,u_{1}^n|x_1^n,\tilde x_2^n) p(\tilde{x}_1^n|u_{1}^n,x_1^n,\tilde x_2^n)p(\tilde{y}^n|\tilde{x}_1^n,\tilde{x}_2^n) \notag\\
&\phantom{w} \times P^{SW}(\hat{u}_{1}^n,\hat{u}_{2}^n|k_1,f_1,k_2,f_2,w^n,\tilde{y}^n) p(y^n|u_{1}^n,u_{2}^n,w^n,\tilde{y}^n) \notag\\
&= p(x_1^n,x_2^n,w^n) P(k_2,f_2|x_2^n) P(u_{2}^n|k_2,f_2,x_2^n) \notag\\
&\phantom{w} \times p(\tilde{x}_2^n|u_{2}^n,x_2^n) P(k_1,f_1|x_1^n,\tilde x_2^n)  \notag\\
&\phantom{w} \times P(u_{1}^n|k_1,f_1,x_1^n,\tilde x_2^n) p(\tilde{x}_1^n|u_{1}^n,x_1^n,\tilde x_2^n)p(\tilde{y}^n|\tilde{x}_1^n,\tilde{x}_2^n) \notag\\
&\phantom{w} \times P^{SW}(\hat{u}_{1}^n,\hat{u}_{2}^n|k_1,f_1,k_2,f_2,w^n,\tilde{y}^n) p(y^n|u_{1}^n,u_{2}^n,w^n,\tilde{y}^n). 
\end{align*}

\noindent \underline{Random Coding Protocol:} In this scheme, we assume the presence of additional shared randomness $F_j$ of rate $\tilde{R}_j, j \in \{1,2\}$ between the respective encoders and the decoder in the original problem. 
Encoder $2$ observes $(k_2,f_2,x_2^n)$, and generates $u_{2}^n$ according to the p.m.f. $P(u_{2}^n|k_2,f_2,x_2^n)$ from the protocol above. Further, encoder $2$ draws $\tilde{x}_2^n$ according to the p.m.f. $p(\tilde{x}_2^n|u_{2j}^n,x_2^n)$. Encoder $1$ observes $(k_1,f_1,x_1^n,\tilde x_2^n)$, and generates $u_{1}^n$ according to the p.m.f. $P(u_{1}^n|k_1,f_1,x_1^n,\tilde x_2^n)$ from the protocol above. Further, encoder $1$ draws $\tilde{x}_1^n$ according to the p.m.f. $p(\tilde{x}_1^n|u_{1}^n,x_1^n,\tilde x_2^n)$. 
The receiver first estimates $(\hat{u}_{1}^n,\hat{u}_{2}^n)$ from its observations $(k_1,k_2,f_1,f_2,w^n,\tilde{y}^n)$ using the Slepian-Wolf decoder from the binning protocol, i.e. $P^{SW}(\hat{u}_{1}^n,\hat{u}_{2}^n|k_1,f_1,k_2,f_2,w^n,\tilde{y}^n)$. 
Then it generates the output $y^n$ according to the distribution $p_{Y^n|U_{1}^n,U_{2}^n,W^n,\tilde{Y}^n}(y^n|\hat{u}_{1}^n,\hat{u}_{2}^n,w^n,\tilde{y}^n)$. 
The induced random p.m.f. from the random coding scheme is
\begin{align*}
&\hat{P}(x_1^n,x_2^n,w^n,y^n,u_{1}^n,u_{2}^n,\tilde{x}_1^n,\tilde{x}_2^n,\tilde{y}^n,k_1,f_1,k_2,f_2,\hat{u}_{1}^n,\hat{u}_{2}^n) \notag\\
&= p^{\text{U}}(k_1)p^{\text{U}}(f_1)p^{\text{U}}(k_2)p^{\text{U}}(f_2)p(x_1^n,x_2^n,w^n)  \notag\\
&\phantom{ww} \times P(u_{2}^n|k_2,f_2,x_2^n) p(\tilde{x}_2^n|u_{2}^n,x_2^n)  \notag\\
&\phantom{w} \times P(u_{1}^n|k_1,f_1,x_1^n,\tilde x_2^n) p(\tilde{x}_1^n|u_{1}^n,x_1^n,\tilde x_2^n)p(\tilde{y}^n|\tilde{x}_1^n,\tilde{x}_2^n) \notag\\
&\phantom{w} \times P^{SW}(\hat{u}_{1}^n,\hat{u}_{2}^n|k_1,f_1,k_2,f_2,w^n,\tilde{y}^n) p(y^n|\hat{u}_{1}^n,\hat{u}_{2}^n,w^n,\tilde{y}^n). 
\end{align*}


\noindent \underline{Analysis of Rate Constraints:}\\
Using the fact that $(k_j,f_j)$ are bin indices of $u_{j}^n$ for $j \in \{1,2\}$, we impose the conditions
\begin{align}
R_{01}+\tilde{R}_1 &\leq H(U_{1}|X_1,\tilde X_2), \label{eq:cond11noisy} \\
R_{02}+\tilde{R}_2 &\leq H(U_{2}|X_2), \label{eq:cond12noisy}
\end{align}
{that ensure, by invoking \cite[Theorem 1]{yassaee2014achievability}
\begin{align}
&P(x_1^n,x_2^n,w^n,u_{1}^n,u_{2}^n,\tilde{x}_1^n,\tilde{x}_2^n,\tilde{y}^n,k_1,f_1,k_2,f_2,\hat{u}_{1}^n,\hat{u}_{2}^n) \notag\\
&\phantom{w} \approx \hat{P}(x_1^n,x_2^n,w^n,u_{1}^n,u_{2}^n,\tilde{x}_1^n,\tilde{x}_2^n,\tilde{y}^n,k_1,f_1,k_2,f_2,\hat{u}_{1}^n,\hat{u}_{2}^n). \label{eq:condit1noisy}
\end{align}

We next impose the following constraints for the success of the (Slepian-Wolf) decoder by Slepian-Wolf theorem~\cite{slepian1973noiseless}
\begin{align}
R_{01}+\tilde{R}_1 &\geq H(U_{1}|U_{2},W,\tilde{Y}), \label{eq:cond21noisy} \\
R_{02}+\tilde{R}_2 &\geq H(U_{2}|U_{1},W,\tilde{Y}), \label{eq:cond22noisy} \\
R_{01}+\tilde{R}_1+R_{02}+\tilde{R}_2 &\geq H(U_{1},U_{2}|W,\tilde{Y}), \label{eq:cond23noisy}
\end{align}
Expressions \eqref{eq:cond21noisy}--\eqref{eq:cond23noisy} suffice to obtain
\begin{align}
&P(x_1^n,x_2^n,w^n,u_{1}^n,u_{2}^n,\tilde{x}_1^n,\tilde{x}_2^n,\tilde{y}^n,k_1,f_1,k_2,f_2,\hat{u}_{1}^n,\hat{u}_{2}^n) \notag\\
&\phantom{w} \approx P(x_1^n,x_2^n,w^n,u_{1}^n,u_{2}^n,\tilde{x}_1^n,\tilde{x}_2^n,\tilde{y}^n,k_1,f_1,k_2,f_2) \notag\\
&\phantom{w} \times \mathbbm{1}\{\hat{u}_{1}^n=u_{1}^n,\hat{u}_{2}^n=u_{2}^n\}. \label{eq:condit2noisy}
\end{align}
Using \eqref{eq:condit2noisy} and \eqref{eq:condit1noisy} in conjunction with the first and third parts of \cite[Lemma 4]{yassaee2014achievability}, we obtain
\begin{align}
&\hat{P}(x_1^n,x_2^n,w^n,u_{1}^n,u_{2}^n,\tilde{x}_1^n,\tilde{x}_2^n,\tilde{y}^n,k_1,f_1,k_2,f_2,\hat{u}_{1}^n,\hat{u}_{2}^n,y^n) \notag\\
&= \hat{P}(x_1^n,x_2^n,w^n,u_{1}^n,u_{2}^n,\tilde{x}_1^n,\tilde{x}_2^n,\tilde{y}^n,k_1,f_1,k_2,f_2,\hat{u}_{1}^n,\hat{u}_{2}^n) \notag\\
&\phantom{w} \times p(y^n|\hat{u}_{1}^n,\hat{u}_{2}^n,w^n,\tilde{y}^n) \notag\\
&\approx P(x_1^n,x_2^n,w^n,u_{1}^n,u_{2}^n,\tilde{x}_1^n,\tilde{x}_2^n,\tilde{y}^n,k_1,f_1,k_2,f_2) \notag\\
&\phantom{w} \times \mathbbm{1}\{\hat{u}_{1}^n=u_{1}^n,\hat{u}_{2}^n=u_{2}^n\} p(y^n|\hat{u}_{1}^n,\hat{u}_{2}^n,w^n,\tilde{y}^n) \notag\\
&= P(x_1^n,x_2^n,w^n,u_{1}^n,u_{2}^n,\tilde{x}_1^n,\tilde{x}_2^n,\tilde{y}^n,k_1,f_1,k_2,f_2,y^n) \notag\\
&\phantom{w} \times \mathbbm{1}\{\hat{u}_{1}^n=u_{1}^n,\hat{u}_{2}^n=u_{2}^n\}.
\end{align}
This implies, by the first part of \cite[Lemma 4]{yassaee2014achievability}
\begin{align}
\hat{P}(x_1^n,x_2^n,w^n,y^n,f_1,f_2) \approx P(x_1^n,x_2^n,w^n,y^n,f_1,f_2). \label{eq:osrbcond1noisy}
\end{align}

We further require $(X_1^n,X_2^n,W^n,Y^n)$ to be nearly independent of $(F_1,F_2)$, so that the latter can be eliminated. 
This is realized by imposing the following conditions:
\begin{align}
\tilde{R}_1 &\leq H(U_{1}|X_1,X_2,W,Y), \label{eq:cond31secnoisy} \\
\tilde{R}_2 &\leq H(U_{2}|X_1,X_2,W,Y), \label{eq:cond32secnoisy} \\
\tilde{R}_1+\tilde{R}_2 &\leq H(U_{1},U_{2}|X_1,X_2,W,Y). \label{eq:cond33secnoisy}
\end{align}
By \cite[Theorem 1]{yassaee2014achievability}, this suffices to obtain
\begin{align*}
&P(x_1^n,x_2^n,w^n,y^n,f_1,f_2) \approx p^{\text{U}}(f_1) p^{\text{U}}(f_2) p(x_1^n,x_2^n,w^n,y^n), 
\end{align*}
which implies that
\begin{align}
&\hat{P}(x_1^n,x_2^n,w^n,y^n,f_1,f_2) \approx p^{\text{U}}(f_1) p^{\text{U}}(f_2) p(x_1^n,x_2^n,w^n,y^n), \label{eq:rcpmf}
\end{align}
by invoking \eqref{eq:osrbcond1noisy} and the triangle inequality.
Hence there exists a fixed binning with corresponding pmf $\tilde{p}$ such that if we replace $P$ by $\tilde{p}$ in \eqref{eq:rcpmf} and denote the resulting pmf by $\hat{p}$, 
\begin{align*}
\hat{p}(x_1^n,x_2^n,w^n,y^n,f_1,f_2) \approx p^{\text{U}}(f_1) p^{\text{U}}(f_2) p(x_1^n,x_2^n,w^n,y^n).
\end{align*}
Now the second part of \cite[Lemma 4]{yassaee2014achievability} allows us to conclude that there exist instances $F_1 = f_1^{*}, F_2=f_2^{*}$ such that 
\begin{align}
\hat{p}(x_1^n,x_2^n,w^n,y^n|f_1^{*},f_2^{*}) \approx p(x_1^n,x_2^n,w^n,y^n).
\end{align}
Finally on eliminating $(\tilde{R}_1,\tilde{R}_2)$ from equations \eqref{eq:cond11noisy} -- \eqref{eq:cond12noisy}, \eqref{eq:cond21noisy} -- \eqref{eq:cond23noisy} and \eqref{eq:cond31secnoisy} -- \eqref{eq:cond33secnoisy} by the FME procedure, we obtain the rate constraints in Theorem~\ref{thm:encsideIBnoisyK0}.

\section{Converse Proof of Theorem \ref{thm:indepnoisy}}\label{proof:conv-thm5}
Consider a coding scheme that induces a joint distribution on $(X_1^n,X_2^n,W^n,Y^n)$ which satisfies the constraint
\begin{gather}
\lVert p_{X_1^n,X_2^n,W^n,Y^n}-q^{(n)}_{X_1X_2WY}\rVert_1 \leq \epsilon. \label{eqn:totalSR1noisy}
\end{gather}
\begin{lemma} \cite[Lemma 6]{CerviaLLB20} \label{lem:c1noisy}
Let $p_{S^n}$ be such that $||p_{S^n}-q_S^{(n)}||_1 \leq \epsilon$, where $q^{(n)}_S(s^n)=\prod_{i=1}^n q_S(s_i)$, then
\begin{align}
\sum_{i=1}^n I_p(S_i;S_{\sim i}) \leq ng(\epsilon),
\end{align}
where ${ g(\epsilon)=2\sqrt{\epsilon}\left(H(S)\!+\!\log{|\mathcal{S}|\!+\!\log{\frac{1}{\sqrt{\epsilon}}}}\right)}\to 0$ as $\epsilon \to 0$.
\end{lemma}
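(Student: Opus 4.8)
The plan is to reduce the statement to a \emph{uniform continuity} estimate for mutual information whose error depends on the alphabet $\mathcal{S}$ but not on $n$, and then to sum the resulting per-coordinate bound. The starting point is that under the i.i.d.\ target $q_S^{(n)}$ the coordinates are independent, so $I_q(S_i;S_{\sim i})=0$; using $H_q(S_i)=H_q(S_i\mid S_{\sim i})=H(S)$ this gives, for each $i$,
\begin{align*}
I_p(S_i;S_{\sim i}) = \big[H_p(S_i)-H(S)\big]-\big[H_p(S_i\mid S_{\sim i})-H(S)\big].
\end{align*}
It therefore suffices to bound each of the two bracketed terms, in absolute value, by a quantity that vanishes with $\epsilon$, uniformly in $i$ and $n$.

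The unconditional term is handled directly: marginalization does not increase total variation, so $\norm{p_{S_i}-q_S}_1\le\norm{p_{S^n}-q_S^{(n)}}_1\le\epsilon$, and Fannes' inequality on the size-$\abs{\mathcal{S}}$ alphabet bounds $\abs{H_p(S_i)-H(S)}$. The conditional term is the crux. First I would record the averaged closeness bound
\begin{align*}
\sum_{s_{\sim i}} p_{S_{\sim i}}(s_{\sim i})\,\norm{p_{S_i\mid S_{\sim i}=s_{\sim i}}-q_S}_1 = \norm{p_{S^n}-q_S\otimes p_{S_{\sim i}}}_1 \le 2\epsilon,
\end{align*}
the last inequality following from the triangle inequality and $\norm{p_{S_{\sim i}}-q_S^{(n-1)}}_1\le\epsilon$. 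Writing $H_p(S_i\mid S_{\sim i})$ as the $p_{S_{\sim i}}$-average of $H(p_{S_i\mid S_{\sim i}=s_{\sim i}})$ and using $H_q(S_i\mid S_{\sim i}=s_{\sim i})=H(S)$ for all $s_{\sim i}$, I would split this average at the threshold $\norm{p_{S_i\mid S_{\sim i}=s_{\sim i}}-q_S}_1\le\sqrt{2\epsilon}$: by Markov's inequality the complementary event has $p$-probability at most $\sqrt{2\epsilon}$, and there each conditional entropy is bounded crudely by $\log\abs{\mathcal{S}}$; on the good event Fannes' inequality on $\mathcal{S}$ gives a bound of order $\sqrt{2\epsilon}\,(\log\abs{\mathcal{S}}+\log\tfrac{1}{\sqrt{2\epsilon}})$. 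Adding the two contributions, and likewise for the unconditional term, yields $I_p(S_i;S_{\sim i})\le g(\epsilon)$ for every $i$ with $g$ of the claimed form, after routine bookkeeping of constants.

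Summing over $i=1,\dots,n$ then gives $\sum_{i=1}^n I_p(S_i;S_{\sim i})\le n\,g(\epsilon)$ with $g(\epsilon)\to0$. The main obstacle---and the source of the $\sqrt{\epsilon}$ rather than a linear-in-$\epsilon$ rate---is keeping every estimate free of $n$. The naive route $H_p(S_i\mid S_{\sim i})=H_p(S^n)-H_p(S_{\sim i})$ followed by Fannes applied to each summand fails, because those entropies sit on alphabets of size $\abs{\mathcal{S}}^n$ and incur an error of order $n\epsilon$, which once summed over the $n$ coordinates degrades to order $n^2$. The conditioning must instead be retained and controlled through the Markov-type splitting above (equivalently, via an Alicki--Fannes--Winter conditional-entropy continuity bound), whose penalty scales only with $\log\abs{\mathcal{S}}$; this reproduces \cite[Lemma~6]{CerviaLLB20}.
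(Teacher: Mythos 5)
The paper does not prove this statement itself: it cites the lemma verbatim from \cite[Lemma~6]{CerviaLLB20}, so there is no in-paper proof to compare against. Your reconstruction is, however, a valid line of attack. The key equality $\sum_{s_{\sim i}} p_{S_{\sim i}}(s_{\sim i})\,\lVert p_{S_i\mid S_{\sim i}=s_{\sim i}}-q_S\rVert_1 = \lVert p_{S^n}-q_S\otimes p_{S_{\sim i}}\rVert_1 \le 2\epsilon$ is correct (the equality is just the factorization $p_{S^n}=p_{S_{\sim i}}p_{S_i\mid S_{\sim i}}$, and the $2\epsilon$ comes from the triangle inequality together with $\lVert p_{S_{\sim i}}-q_S^{\otimes(n-1)}\rVert_1\le\epsilon$ by marginalization), the Markov split at $\sqrt{2\epsilon}$ is sound, and applying Fannes/Csisz\'ar--K\"orner uniform continuity of entropy per coordinate (rather than on the $\lvert\mathcal S\rvert^n$-ary alphabet) is precisely the mechanism that avoids the $O(n^2\epsilon)$ blow-up you correctly identify as the obstacle. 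The observation that $H_q(S_i\mid S_{\sim i}=s_{\sim i})=H(S)$ for every $s_{\sim i}$, so that $H(S)$ can be pulled inside the $p_{S_{\sim i}}$-average, is also the right trick.

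One caveat worth flagging: the constants you obtain do not obviously reproduce the stated $g(\epsilon)=2\sqrt{\epsilon}\bigl(H(S)+\log\lvert\mathcal S\rvert+\log\tfrac{1}{\sqrt{\epsilon}}\bigr)$. Your split gives roughly $\sqrt{2\epsilon}\bigl(2\log\lvert\mathcal S\rvert+\log\tfrac{1}{\sqrt{2\epsilon}}\bigr)$ for the conditional term, i.e.\ an extra factor of $\log\lvert\mathcal S\rvert$ in place of the $H(S)$ that appears in the cited bound; the appearance of $H(S)$ suggests the reference's proof bounds the bad-event contribution via $\lvert H(p)-H(S)\rvert\le H(p)+H(S)\le\log\lvert\mathcal S\rvert+H(S)$ or a similar entropy-difference estimate rather than by $\log\lvert\mathcal S\rvert$ outright. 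This is a constant-level discrepancy: your argument certainly yields $\sum_i I_p(S_i;S_{\sim i})\le n\,\tilde g(\epsilon)$ with $\tilde g(\epsilon)=O\bigl(\sqrt{\epsilon}(\log\lvert\mathcal S\rvert+\log\tfrac1\epsilon)\bigr)\to0$, which is all that is needed where the lemma is invoked in the converse of Theorem~\ref{thm:indepnoisy}, but as written it does not literally certify the displayed $g(\epsilon)$ without further tuning of the threshold and the per-coordinate entropy-continuity estimate.
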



Let us now prove the first inequality in Theorem~\ref{thm:indepnoisy}. Recall that $\tilde Y= (\tilde Y_1,\tilde Y_2)$. 
\begin{align}
0 &\leq H(\tilde Y_1^n|W^n)-I(\tilde Y_1^n; K_1,X_1^n|W^n) \notag\\
&\leq H(\tilde Y_1^n|W^n)-I(\tilde Y_1^n; X_1^n|K_1,W^n) \notag\\
&\leq \sum_{i=1}^n H(\tilde Y_{1i}|W_i)-\sum_{i=1}^n I(\tilde Y_1^n; X_{1i}|K_1,X_{1,i+1}^n,W_{\sim i},W_i) \notag\\
&\stackrel{(a)}= \sum_{i=1}^n H(\tilde Y_{1i}|W_i)-\sum_{i=1}^n I(K_1,\tilde Y_1^n,X_{1,i+1}^n,W_{\sim i}; X_{1i}|W_i) \notag\\
&\leq \sum_{i=1}^n H(\tilde Y_{1i}|W_i)-\sum_{i=1}^n I(K_1,\tilde Y_{1\sim i},W_{\sim i},\tilde Y_{1i}; X_{1i}|W_i) \notag\\
&\stackrel{(b)}= \sum_{i=1}^n H(\tilde Y_{1i}|W_i)-\sum_{i=1}^n I(U_{1i},\tilde Y_{1i}; X_{1i}|W_i) \notag\\
&= nH(\tilde Y_{1T}|W_T,T)-nI(U_{1T},\tilde Y_{1T}; X_{1T}|W_T,T) \notag\\
&= nH(\tilde Y_{1}|W,T)-nI(U_{1},\tilde Y_{1}; X_{1}|W,T), \label{refnoisy1}
\end{align}
where (a) follows from the joint i.i.d. nature of $(X_{1i},W_i),i=1,\dots,n$ and the independence of $K_1$ from $(X_1^n,W^n)$, while (b) follows from an auxiliary random variable identification $U_{1i}=(K_1,\tilde Y_{1\sim i},W_{\sim i})$. The inequality $H(\tilde{Y}_{2}|W,T) \geq I(U_{2},\tilde Y_2;X_2|W,T)$ follows analogously, with an auxiliary random variable choice given by $U_{2i}=(K_2,\tilde Y_{2\sim i})$. 

The constraint on the shared randomness rate $R_{01}$ is proved next. Consider the sequence of inequalities below:
\begin{align}
&nR_{01} = H(K_1) \geq H(K_1|W^n) \notag\\
&= H(K_1,\tilde Y_1^n|W^n)-H(\tilde Y_1^n|K_1,W^n) \notag\\
&\geq H(K_1,\tilde Y_1^n|X_2^n,W^n)-H(\tilde Y_1^n|W^n) \notag\\
&\geq I(K_1,\tilde Y_1^n;X_1^n,Y^n|X_2^n,W^n)-H(\tilde Y_1^n|W^n) \notag\\
&\geq \sum_{i=1}^n I(K_1,\tilde Y_1^n;X_{1i},Y_i|X_{1,i+1}^n,Y_{i+1}^n,X_2^n,W^n)\!-\!\!\sum_{i=1}^n H(\tilde Y_{1i}|W_i) \notag\\
&= \sum_{i=1}^n I(K_1,\tilde Y_1^n,X_{1,i+1}^n,Y_{i+1}^n,X_{2 \sim i},W_{\sim i};X_{1i},Y_i|X_{2i},W_i) \notag\\
&\phantom{www}-\sum_{i=1}^n I(X_{1,i+1}^n,Y_{i+1}^n,X_{2 \sim i},W_{\sim i};X_{1i},Y_i|X_{2i},W_i) \notag\\
&\phantom{www}-\sum_{i=1}^n H(\tilde Y_{1i}|W_i) \notag\\
&\stackrel{(a)}\geq \sum_{i=1}^n I(K_1,\tilde Y_{1\sim i},W_{\sim i},\tilde Y_{1i};X_{1i},Y_i|X_{2i},W_i)-ng(\epsilon) \notag\\
&\phantom{www}-\sum_{i=1}^n H(\tilde Y_{1i}|W_i) \notag\\
&= \sum_{i=1}^n I(U_{1i},\tilde Y_{1i};X_{1i},Y_i|X_{2i},W_i)-\sum_{i=1}^n H(\tilde Y_{1i}|W_i)-ng(\epsilon) \notag\\
&= nI(U_{1T},\tilde Y_{1T};X_{1T},Y_T|X_{2T},W_T,T)\!-\!nH(\tilde Y_{1T}|W_T,T)\!-\!ng(\epsilon) \notag\\
&= nI(U_{1},\tilde Y_{1};X_{1},Y|X_{2},W,T) -nH(\tilde Y_{1}|W,T)-ng(\epsilon),
\end{align}
where (a) follows 
by \eqref{eqn:totalSR1noisy} and Lemma~\ref{lem:c1noisy}.

The proof of Theorem~\ref{thm:indepnoisy} is completed by showing the continuity of the derived converse bound at $\epsilon=0$ (note that $g(0):=0$ through continuous extension of the function $g(\epsilon)$). This continuity follows from cardinality bounds on the auxiliary random variables $(U_1,U_2)$ to ensure the compactness of the simplex, as outlined in \cite[Lemma VI.5]{cuff2013distributed}. 
The cardinalities of $U_{1}$ and $U_{2}$ can be restricted as mentioned following the perturbation method of \cite{gohari2012evaluation}, similar to \cite{RamachandranOSISIT2024,ramachandran2017feedback}.
Finally, by invoking the continuity properties of total variation distance and mutual information in the probability simplex, the converse for Theorem~\ref{thm:indepnoisy} is complete. 

\balance

\bibliographystyle{IEEEtran}
\bibliography{mylit}

\begin{thebibliography}{10}
\providecommand{\url}[1]{#1}
\csname url@samestyle\endcsname
\providecommand{\newblock}{\relax}
\providecommand{\bibinfo}[2]{#2}
\providecommand{\BIBentrySTDinterwordspacing}{\spaceskip=0pt\relax}
\providecommand{\BIBentryALTinterwordstretchfactor}{4}
\providecommand{\BIBentryALTinterwordspacing}{\spaceskip=\fontdimen2\font plus
\BIBentryALTinterwordstretchfactor\fontdimen3\font minus
  \fontdimen4\font\relax}
\providecommand{\BIBforeignlanguage}[2]{{%
\expandafter\ifx\csname l@#1\endcsname\relax
\typeout{** WARNING: IEEEtran.bst: No hyphenation pattern has been}%
\typeout{** loaded for the language `#1'. Using the pattern for}%
\typeout{** the default language instead.}%
\else
\language=\csname l@#1\endcsname
\fi
#2}}
\providecommand{\BIBdecl}{\relax}
\BIBdecl

\bibitem{cuff2010coordination}
P.~Cuff, H.~Permuter, and T.~Cover, ``Coordination capacity,'' \emph{IEEE
  Transactions on Information Theory}, vol.~56, no.~9, pp. 4181--4206, 2010.

\bibitem{SatpathyC16}
S.~{Satpathy} and P.~{Cuff}, ``Secure cascade channel synthesis,'' \emph{IEEE
  Transactions on Information Theory}, vol.~62, no.~11, pp. 6081--6094, 2016.

\bibitem{kurri2022multiple}
G.~R. Kurri, V.~Ramachandran, S.~R.~B. Pillai, and V.~M. Prabhakaran,
  ``Multiple access channel simulation,'' \emph{IEEE Transactions on
  Information Theory}, vol.~68, no.~11, pp. 7575--7603, 2022.

\bibitem{ramachandran2020strong}
V.~Ramachandran, S.~R.~B. Pillai, and V.~M. Prabhakaran, ``Strong coordination
  with side information,'' in \emph{2020 IEEE International Symposium on
  Information Theory (ISIT)}, 2020, pp. 1564--1569.

\bibitem{RamachandranOSIZS2024}
V.~Ramachandran, T.~J. Oechtering, and M.~Skoglund, ``Multi-terminal strong
  coordination over noiseless networks with secrecy constraints,'' in
  \emph{2024 International Zurich Seminar on Information and Communication
  (IZS)}.\hskip 1em plus 0.5em minus 0.4em\relax ETH Z{\"u}rich Library, 2024,
  pp. 159--163.

\bibitem{RamachandranOSITW2024}
------, ``Multi-terminal strong coordination with degraded source
  observations,'' in \emph{2024 IEEE Information Theory Workshop (ITW)}, 2024,
  pp. 103--108.

\bibitem{ramachandran2024multi}
------, ``Multi-terminal strong coordination subject to secrecy constraints,''
  \emph{arXiv preprint arXiv:2411.14123}, 2024.

\bibitem{cuff2013distributed}
P.~Cuff, ``Distributed channel synthesis,'' \emph{IEEE Transactions on
  Information Theory}, vol.~59, no.~11, pp. 7071--7096, 2013.

\bibitem{HaddadpourYAG13}
F.~Haddadpour, M.~H. Yassaee, M.~R. Aref, and A.~Gohari, ``When is it possible
  to simulate a {DMC} channel from another?'' in \emph{IEEE Information Theory
  Workshop}, 2013, pp. 1--5.

\bibitem{HaddadpourYBGAA17}
F.~Haddadpour, M.~H. Yassaee, S.~Beigi, A.~Gohari, and M.~R. Aref, ``Simulation
  of a channel with another channel,'' \emph{IEEE Transactions on Information
  Theory}, vol.~63, no.~5, pp. 2659--2677, 2017.

\bibitem{willems1985discrete}
F.~Willems and E.~Van~der Meulen, ``The discrete memoryless multiple-access
  channel with cribbing encoders,'' \emph{IEEE Transactions on Information
  Theory}, vol.~31, no.~3, pp. 313--327, 1985.

\bibitem{asnani2012multiple}
H.~Asnani and H.~H. Permuter, ``Multiple-access channel with partial and
  controlled cribbing encoders,'' \emph{IEEE Transactions on Information
  Theory}, vol.~59, no.~4, pp. 2252--2266, 2012.

\bibitem{huleihel2016multiple}
W.~Huleihel and Y.~Steinberg, ``Multiple access channel with unreliable
  cribbing,'' in \emph{2016 IEEE International Symposium on Information Theory
  (ISIT)}.\hskip 1em plus 0.5em minus 0.4em\relax IEEE, 2016, pp. 1491--1495.

\bibitem{berger1977multiterminal}
T.~Berger, ``Multiterminal source coding,'' \emph{The information theory
  approach to communications}, vol. 229, pp. 171--231, 1977.

\bibitem{yassaee2014achievability}
M.~Yassaee, M.~Aref, and A.~Gohari, ``Achievability proof via output statistics
  of random binning,'' \emph{IEEE Transactions on Information Theory}, vol.~60,
  no.~11, pp. 6760--6786, 2014.

\bibitem{slepian1973noiseless}
D.~Slepian and J.~Wolf, ``Noiseless coding of correlated information sources,''
  \emph{IEEE Transactions on information Theory}, vol.~19, no.~4, pp. 471--480,
  1973.

\bibitem{CerviaLLB20}
G.~Cervia, L.~Luzzi, M.~Le~Treust, and M.~R. Bloch, ``Strong coordination of
  signals and actions over noisy channels with two-sided state information,''
  \emph{IEEE Transactions on Information Theory}, vol.~66, no.~8, pp.
  4681--4708, 2020.

\bibitem{gohari2012evaluation}
A.~A. Gohari and V.~Anantharam, ``Evaluation of {M}arton's inner bound for the
  general broadcast channel,'' \emph{IEEE Transactions on Information Theory},
  vol.~58, no.~2, pp. 608--619, 2012.

\bibitem{RamachandranOSISIT2024}
V.~Ramachandran, T.~J. Oechtering, and M.~Skoglund, ``Multi-terminal strong
  coordination over noisy channels with secrecy constraints,'' in \emph{2024
  IEEE International Symposium on Information Theory (ISIT)}, 2024, pp.
  1925--1930.

\bibitem{ramachandran2017feedback}
V.~Ramachandran and S.~R.~B. Pillai, ``Feedback-capacity of degraded {G}aussian
  vector {BC} using directed information and concave envelopes,'' in \emph{2017
  Twenty-third National Conference on Communications (NCC)}.\hskip 1em plus
  0.5em minus 0.4em\relax IEEE, 2017, pp. 1--6.

\end{thebibliography}
\clearpage

\appendices
\section{Converse Proof of Proposition~\ref{prop:1}}\label{sec:exproof}
For the converse, we need to prove that for any p.m.f. 
\begin{align*}
p(x_1,&x_2,u_{1},u_{2},\tilde{x}_1,\tilde{x}_2,y)=\\
&\hspace{3pt} p(x_1)p(x_2) p(u_{1},\tilde x_1|x_1) p(u_{2},\tilde x_2|x_2) p(y|u_{1},u_{2},\tilde{x}_1,\tilde x_2)
\end{align*}
such that $$\sum\limits_{u_{1},u_{2},\tilde{x}_1,\tilde{x}_2}p(x_1,x_2,u_{1},u_{2},\tilde{x}_1,\tilde{x}_2,y)=q(x_1,x_2,y),$$ we have $I(U_1,\tilde X_1;X_1) \geq 2$ and $I(U_2,\tilde X_2;X_2) \geq 1$. The independence between $X_1$ and $X_2$ along with the long Markov chain $(U_1,\tilde X_1) \to X_1 \to X_2 \to (U_2,\tilde X_2)$ ensures that $(U_1,\tilde X_1,X_1)$ is independent of $(U_2,\tilde X_2,X_2)$. We also have the output Markov chain $Y\rightarrow (U_1,\tilde X_1,U_2,\tilde X_2)\rightarrow (X_1,X_2)$. Clearly, if $H(X_1|U_1,\tilde X_1)=H(X_2|U_2,\tilde X_2)=0$, it follows that $I(U_1,\tilde X_1;X_1)=H(X_1)=2$ and $I(U_2,\tilde X_2;X_2)=H(X_2)=1$. We now prove that if either $H(X_1|U_1,\tilde X_1)>0$ or $H(X_2|U_2,\tilde X_2)>0$, a contradiction arises.

Let $H(X_2|U_2,\tilde X_2)>0$ (which means $I(U_2,\tilde X_2;X_2)=H(X_2)-H(X_2|U_2,\tilde X_2)=1-H(X_2|U_2,\tilde X_2)<1$) for the sake of contradiction. 
Hence there exist $(u_2,\tilde x_2)$ with $P(U_2=u_2,\tilde X_2=\tilde x_2)>0$ such that $p_{X_2|U_2=u_2,\tilde X_2=\tilde x_2}$ is supported on \{1,2\}. We note that the Markov chain $Y\rightarrow (X_1,U_2,\tilde X_2)\rightarrow X_2$ holds because
\begin{align}
&I(Y;X_2|X_1,U_2,\tilde X_2) \leq I(Y,U_1,\tilde X_1;X_2|X_1,U_2,\tilde X_2)\notag\\
&=\!I(U_1,\tilde X_1;X_2|X_1,U_2,\tilde X_2)+I(Y;X_2|U_1,\tilde X_1,U_2,\tilde X_2,X_1)\notag\\
&\leq\!  I(U_1,\tilde X_1,X_1;U_2,\tilde X_2,X_2)\!+I(Y;X_1,X_2|U_1,\tilde X_1,U_2,\tilde X_2)\notag\\
&=0, \label{eq:MCprop}
\end{align}
where the last equality follows because $(U_1,\tilde X_1,X_1)$ is independent of $(U_2,\tilde X_2,X_2)$ and the Markov chain $Y\rightarrow (U_1,\tilde X_1,U_2,\tilde X_2)\rightarrow (X_1,X_2)$ holds.

Let us consider the induced distribution given by $p_{Y|X_1=(0,1),U_2=u_2,\tilde X_2=\tilde x_2}$. This is well-defined because $P(X_1=(0,1),U_2=u_2,\tilde X_2=\tilde x_2)>0$ as $X_1$ is independent of $(U_2,\tilde X_2)$ and $P(X_1=(0,1))>0,P(U_2=u_2,\tilde X_2=\tilde x_2)>0$.  The fact that 
\begin{align*} 
&P(X_2=2|X_1=(0,1),U_2=u_2,\tilde X_2=\tilde x_2)\\
&\phantom{www}=P(X_2=2|U_2=u_2,\tilde X_2=\tilde x_2)>0
\end{align*}
along with the Markov chain $Y\rightarrow (X_1,U_2,\tilde X_2)\rightarrow X_2$ imply  
\begin{align}
&P(Y=0|X_1=(0,1),U_2=u_2,\tilde X_2=\tilde x_2) \notag\\
&=P(Y=0|X_1=(0,1),U_2=u_2,\tilde X_2=\tilde x_2,X_2=2)=0. \label{eq:contra1}
\end{align}
Likewise, the fact that 
\begin{align*}
&P(X_2=1|X_1=(0,1),U_2=u_2,\tilde X_2=\tilde x_2)\\
&\phantom{www}=P(X_2=1|U_2=u_2,\tilde X_2=\tilde x_2)>0
\end{align*}
along with the Markov chain $Y\rightarrow (X_1,U_2,\tilde X_2)\rightarrow X_2$ imply  
\begin{align}
&P(Y=1|X_1=(0,1),U_2=u_2,\tilde X_2=\tilde x_2) \notag\\
&=P(Y=1|X_1=(0,1),U_2=u_2,\tilde X_2=\tilde x_2,X_2=1)=0. \label{eq:contra2} 
\end{align}
From \eqref{eq:contra1} and \eqref{eq:contra2}, we are led to a contradiction since $p_{Y|X_1=(0,1),U_2=u_2,\tilde X_2=\tilde x_2}$ has to be a probability distribution.

Similarly, 
let $H(X_1|U_1,\tilde X_1)>0$ (which means $I(U_1,\tilde X_1;X_1)=H(X_1)-H(X_1|U_1,\tilde X_1)=2-H(X_1|U_1,\tilde X_1)<2$) for the sake of contradiction. 
Hence there exist $(u_1,\tilde x_1)$ with $P(U_1=u_1,\tilde X_1=\tilde x_1)>0$ such that $p_{X_1|U_1=u_1,\tilde X_1=\tilde x_1}$ has a support whose size is larger than $1$. This means that the support can be a superset of $\{(0,0),(0,1)\}$, $\{(0,0),(1,0)\}$, $\{(0,0),(1,1)\}$, $\{(0,1),(1,0)\}$, $\{(0,1),(1,1)\}$ or $\{(1,0),(1,1)\}$ -- these are considered in turn next. We note that the Markov chain $Y\rightarrow (X_2,U_1,\tilde X_1)\rightarrow X_1$ holds via similar reasoning as \eqref{eq:MCprop}.

Suppose that the support of $p_{X_1|U_1=u_1,\tilde X_1=\tilde x_1}$ is a superset of $\{(0,0),(0,1)\}$. Let us consider the induced distribution given by $p_{Y|X_2=2,U_1=u_1,\tilde X_1=\tilde x_1}$. This is well-defined because $P(X_2=2,U_1=u_1,\tilde X_1=\tilde x_1)>0$ as $X_2$ is independent of $(U_1,\tilde X_1)$ and $P(X_2=2)>0,P(U_1=u_1,\tilde X_1=\tilde x_1)>0$.  The fact that $P(X_1=(0,1)|X_2=2,U_1=u_1,\tilde X_1=\tilde x_1)>0$ along with the Markov chain $Y\rightarrow (X_2,U_1,\tilde X_1)\rightarrow X_1$ imply  
\begin{align*}
&P(Y=0|X_2=2,U_1=u_1,\tilde X_1=\tilde x_1) \notag\\
&=P(Y=0|X_2=2,U_1=u_1,\tilde X_1=\tilde x_1,X_1=(0,1))=0. 
\end{align*}
Likewise, the fact that $P(X_1=(0,0)|X_2=2,U_1=u_1,\tilde X_1=\tilde x_1)>0$ along with the Markov chain $Y\rightarrow (X_2,U_1,\tilde X_1)\rightarrow X_1$ imply  
\begin{align*}
&P(Y=1|X_2=2,U_1=u_1,\tilde X_1=\tilde x_1) \notag\\
&=P(Y=1|X_2=2,U_1=u_1,\tilde X_1=\tilde x_1,X_1=(0,0))=0. 
\end{align*}
We are led to a contradiction since $p_{Y|X_2=2,U_1=u_1,\tilde X_1=\tilde x_1}$ has to be a probability distribution.

Next suppose that the support of $p_{X_1|U_1=u_1,\tilde X_1=\tilde x_1}$ is a superset of $\{(0,0),(1,0)\}$. We consider the well-defined distribution $p_{Y|X_2=1,U_1=u_1,\tilde X_1=\tilde x_1}$. The fact that $P(X_1=(1,0)|X_2=1,U_1=u_1,\tilde X_1=\tilde x_1)>0$ along with the Markov chain $Y\rightarrow (X_2,U_1,\tilde X_1)\rightarrow X_1$ imply  
\begin{align*}
&P(Y=0|X_2=1,U_1=u_1,\tilde X_1=\tilde x_1) \notag\\
&=P(Y=0|X_2=1,U_1=u_1,\tilde X_1=\tilde x_1,X_1=(1,0))=0. 
\end{align*}
Likewise, the fact that $P(X_1=(0,0)|X_2=1,U_1=u_1,\tilde X_1=\tilde x_1)>0$ along with the Markov chain $Y\rightarrow (X_2,U_1,\tilde X_1)\rightarrow X_1$ imply  
\begin{align*}
&P(Y=1|X_2=1,U_1=u_1,\tilde X_1=\tilde x_1) \notag\\
&=P(Y=1|X_2=1,U_1=u_1,\tilde X_1=\tilde x_1,X_1=(0,0))=0. 
\end{align*}
We are led to a contradiction since $p_{Y|X_2=1,U_1=u_1,\tilde X_1=\tilde x_1}$ has to be a probability distribution.
The other supports $\{(0,0),(1,1)\}$, $\{(0,1),(1,0)\}$, $\{(0,1),(1,1)\}$ and $\{(1,0),(1,1)\}$ can be analyzed in a similar manner to arrive at a contradiction.

\end{document}